\documentclass[11pt]{article}
\usepackage{amsmath,amssymb}
\usepackage{amsthm}
\usepackage{mathtools}
\usepackage[noend]{algorithmic}
\usepackage[ruled,vlined]{algorithm2e}
\usepackage{url}
\usepackage{fullpage}
\usepackage{makeidx}
\usepackage{enumerate}
\usepackage[top=1in, bottom=1.25in, left=1in, right=1in]{geometry}
\usepackage{graphicx,float,psfrag,epsfig,caption}
\usepackage[usenames,dvipsnames,svgnames,table]{xcolor}
\definecolor{darkgreen}{rgb}{0.0,0,0.9}
\usepackage[pagebackref,letterpaper=true,colorlinks=true,pdfpagemode=none,citecolor=OliveGreen,linkcolor=BrickRed,urlcolor=BrickRed,pdfstartview=FitH]{hyperref}
\usepackage{epstopdf}
\usepackage{color}
\usepackage{xr}
\usepackage{subfig}
\usepackage{caption}
\usepackage{graphicx}
\usepackage[utf8]{inputenc}

\usepackage{scalerel,stackengine}

\stackMath
\newcommand\reallywidehat[1]{%
\savestack{\tmpbox}{\stretchto{%
  \scaleto{%
    \scalerel*[\widthof{\ensuremath{#1}}]{\kern.1pt\mathchar"0362\kern.1pt}%
    {\rule{0ex}{\textheight}}
  }{\textheight}%
}{2.4ex}}%
\stackon[-6.9pt]{#1}{\tmpbox}%
}
\parskip 1ex
\usepackage[mathscr]{euscript}

\DeclareSymbolFont{rsfs}{U}{rsfs}{m}{n}
\DeclareSymbolFontAlphabet{\mathscrsfs}{rsfs}

\numberwithin{equation}{section}

\newtheoremstyle{myexample} 
    {\topsep}                    
    {\topsep}                    
    {\rm }                   
    {}                           
    {\bf }                   
    {.}                          
    {.5em}                       
    {}  

\newtheoremstyle{myremark} 
    {\topsep}                    
    {\topsep}                    
    {\rm}                        
    {}                           
    {\bf}                        
    {.}                          
    {.5em}                       
    {}  

\newtheorem{theorem}{Theorem}

\theoremstyle{myremark}
\newtheorem{remark}{Remark}[section]

\theoremstyle{myremark}

\theoremstyle{myexample}

\definecolor{darkgreen}{rgb}{0.0, 0.5, 0.0}

\newcommand{\bea}{\begin{eqnarray}}
\newcommand{\eea}{\end{eqnarray}}
\newcommand{\<}{\langle}
\renewcommand{\>}{\rangle}
\newcommand{\E}{{\mathbb E}}

\def\Unif{{\sf Unif}}

\def\eps{{\varepsilon}}

\def\id{{\boldsymbol{I}}}

\def\muc{\mu_{\bx_0,\bz,t}}
\def\mus{\mu_{t}}

\def\bSigma{{\boldsymbol{\Sigma}}}

\def\bQ{{\boldsymbol{Q}}}

\def\dert{\frac{\de\phantom{t}}{\de t}}

\def\info{{\rm I}}
\def\KL{{\rm D}}

\def\bg{{\boldsymbol{g}}}

\def\cF{{\mathcal F}}

\def\mmse{{\rm mmse}}

\def\reals{{\mathbb R}}

\def\normal{{\sf N}}

\def\sT{{\sf T}}

\def\bv{{\boldsymbol{v}}}
\def\bz{{\boldsymbol{z}}}
\def\bx{{\boldsymbol{x}}}
\def\ba{{\boldsymbol{a}}}

\def\bA{\boldsymbol{A}}
\def\bB{\boldsymbol{B}}

\def\hbx{\boldsymbol{\hat{x}}}

\def\de{{\rm d}}

\def\bW{\boldsymbol{W}}

\def\E{{\mathbb E}}

\def\<{\langle}
\def\>{\rangle}
\def\Tr{{\rm Tr}}

\def\by{{\boldsymbol{y}}}

\def\cov{{\rm Cov}}

\def\b0{{\boldsymbol{0}}}

\def\bR{{\boldsymbol R}}


\newcommand{\R}{\mathbb{R}}


\newcommand{\rmd}{\mathrm{d}}

\newcommand{\mF}{\mathcal{F}}

\renewcommand{\hat}{\widehat}

\newcommand{\sMLE}{\mathrm{\tiny{ML}}}
\newcommand{\sB}{\mathrm{\tiny{Bayes}}}
\begin{document}

\title{An Information-Theoretic View of Stochastic Localization}

\author{Ahmed El Alaoui\thanks{ Department of Statistics and Data Science, Cornell University.} \and Andrea Montanari\thanks{Department of Electrical Engineering and
  Department of Statistics, Stanford University.}
}
\date{}

\maketitle

\begin{abstract}
Given a probability measure $\mu$ over $\R^n$, it is often useful to approximate it 
by the convex combination of a small number of probability measures,
such that each component is close to a product measure.
Recently, Ronen Eldan used a stochastic localization argument to prove a general decomposition 
result of this type. In Eldan's theorem, the `number of components' is characterized by the 
entropy of the mixture, and `closeness to product' is characterized by the covariance matrix
of each component.

We present an elementary proof of Eldan's theorem which makes use of an information theory 
(or estimation theory) interpretation. The proof is analogous to the one of an earlier
decomposition result known as the `pinning lemma.'
\end{abstract}

\section{Motivation and result}

Let $\mu$ be an arbitrary Borel probability measure on $\R^n$.  A broadly useful
approach to understanding $\mu$ is to decompose it into simpler components
\begin{align}
\mu =\int_{\Theta} \mu_{\theta}\, \rho(\de\theta) =: \E_{\theta}\mu_{\theta}\, .
\label{eq:FirstDec}
\end{align}
Here $(\Theta,\cF_{\Theta},\rho)$ is an abstract probability space, and, for each
$\theta$, $\mu_{\theta}$ is a probability measure on $\reals^n$.
In this paper, the components $\mu_{\theta}$ will be simple in the sense 
of being close to product measures. 

Of course, the decomposition  \eqref{eq:FirstDec} is always possible:
just take $\Theta=\R^n$, $\rho = \mu$, and $\mu_{\theta}=\delta_{\theta}$ 
(a point mass at $\theta$) for $\theta\in\R^n$. This is a `maximum entropy' decomposition
(all the entropy of $\mu$ is pushed into $\rho$), and is not particularly useful. 
We will be instead interested in decompositions such that $\rho$ has small entropy, and
the $\mu_{\theta}$'s are only approximately product measures.

Low-entropy decompositions have found applications in statistical mechanics \cite{coja2019bethe,austin2019structure},
random graph theory \cite{chatterjee2016nonlinear,eldan2018gaussian},  
random constraint satisfaction problems \cite{ayre2020satisfiability}, 
high-dimensional statistics \cite{coja2018information}, analysis of Markov chains \cite{eldan2021spectral} to name a few areas. A generic construction consists in letting 
$\mu_\theta$ be the conditional distribution of $\bx\sim\mu$ given a small subset of 
the coordinates of $\bx$ (see below for further discussion). 
This leads to the so-called `pinning lemma' which was 
discovered independently in \cite{MontanariSparse,raghavendra2012approximating}. Recently,
 Ronen Eldan \cite{eldan2020taming} pointed out that 
the pining lemma can be suboptimal and proved a general decomposition result with 
better properties. Eldan's proof follows the general approach of `stochastic localization'
which is in turn inspired by ideas in high-dimensional geometry \cite{eldan2016skorokhod}.

The purpose of this note is to present a new interpretation of Eldan's theorem,
leading to an elementary proof. The interpretation has an information-theoretic
(or estimation-theoretic) nature. We consider a noisy communication
channel that takes as input $\bx\sim \mu$ and outputs $\theta$. The distribution
$\mu_{\theta}$ is then the conditional distribution of the channel input given its 
output $\theta$. Note that this is the same interpretation as for the proof of the pinning lemma
in \cite{MontanariSparse}. The main difference is that while in the pinning lemma the channel 
$\bx\to\theta$ is an erasure channel, here we will use a Gaussian channel.

Turning to our construction,  let $\bQ\in \R^{n \times n}$ be a fixed positive
semidefinite matrix,
$\bx \sim \mu$ and define the noisy observation (channel output) $\by$
via
\begin{align}
\by = \sqrt{\tau} \, \bx + \bQ^{1/2}\bz\, ,\label{eq:GaussianChannel}
\end{align} 
where  $\bz \sim \normal(0,\id_n)$, and $\tau$ is uniform in the interval $[1,2]$. 
The random variables $\bx$, $\bz$ and $\tau$ are independent. 
For any fixed $\tau=t$, this is a noisy Gaussian channel, with signal-to-noise
ratio (SNR) $t$. Introducing a random $\tau$ is convenient for the proof, 
but the estimates we prove hold also (with possibly worse constants)
for all $\tau\in[1,2]$, except on a set of small measure. 

We set $\theta = (\by,\tau)$ and $\mu_{\theta}(\cdot) = \mu(\,\cdot\, | \theta)$. 
It is clear that $\mu = \E_{\theta} \mu_{\theta}$ so that the decomposition
\eqref{eq:FirstDec} holds.

Let $\nu$ be a background measure on $\R^n$ such that $\mu$ is absolutely continuous
with respect to $\nu$, i.e. $\mu \ll \nu$.
Recall that the relative entropy of $\mu$ with respect to $\nu$
(Kullback-Leibler divergence) is defined by
\begin{align}
\KL(\mu \|\nu)= \E_{\mu} \log \frac{\rmd \mu}{\rmd \nu}(\bx).
\end{align}
It is easy to see that $\mu_{\theta} \ll \mu \ll \nu$ a.s., so that 
$\KL(\mu_{\theta}\|\nu)$ is well defined.
 
\begin{theorem}[\cite{eldan2020taming}]\label{thm:main}
We have
\begin{align}
\E_{\theta}\cov(\mu_{\theta}) &\preceq \bQ,\label{eq:cov1}\\ 
0\le \E_{\theta} \KL(\mu_{\theta}\|\nu) - \KL(\mu\|\nu) &\le 
\frac{1}{2}\log \det \big(\id_n+2\bQ^{-1}\cov(\mu)\big),\label{eq:entropy}\\
\E_{\theta} \big\{\cov(\mu_{\theta})\bQ^{-1}\cov(\mu_{\theta}) \big\}  
& \preceq \cov(\mu)\label{eq:cov2}.
\end{align} 
\end{theorem}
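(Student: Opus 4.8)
The plan is to read $\theta=(\by,\tau)$ as the output of a Gaussian channel applied to $\bx\sim\mu$, and to run a handful of classical estimation-theoretic identities, with the random SNR $\tau\in[1,2]$ playing the same role as the random pinning set in the pinning lemma: it turns differential identities into averaged (``difference'') inequalities. First I would reduce to $\bQ\succ0$; the singular case follows by applying the statement to $\bQ+\varepsilon\id$ and letting $\varepsilon\downarrow0$, using that \eqref{eq:cov1} and \eqref{eq:cov2} survive the limit and the right-hand side of \eqref{eq:entropy} is monotone and continuous in $\varepsilon$. I would also assume $\cov(\mu)\prec\infty$, since otherwise \eqref{eq:cov1}, \eqref{eq:cov2} and the upper bound in \eqref{eq:entropy} are vacuous; note $\mu_\theta$ always has sub-Gaussian tails (it is $\mu$ reweighted by a nondegenerate Gaussian likelihood), so $\cov(\mu_\theta)$ and all posterior moments are finite a.s. Conditioning on $\tau=t$, write $\by_t=\sqrt t\,\bx+\bQ^{1/2}\bz$, and set $\bm_t(\by)=\E[\bx\mid\by_t=\by]$ (posterior mean), $\bSigma_t(\by)=\cov(\bx\mid\by_t=\by)$ (posterior covariance), and $\cov_t=\E[\bSigma_t(\by_t)]$ (the MMSE matrix), so $\cov_0=\cov(\mu)$; conditionally on $\tau=t$ the law of $\cov(\mu_\theta)$ is that of $\bSigma_t(\by_t)$, hence $\E_\theta[\,\cdot\,]=\int_1^2\E[\,\cdot\,(\by_t)]\,\rmd t$.

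\textbf{Tools.} I would use two elementary facts and one identity that requires work. \emph{Fact 1:} for every measurable estimator $g$, $\E[\var(\langle\bv,\bx\rangle\mid\by_t)]\le\E[(\langle\bv,\bx\rangle-g(\by_t))^2]$, since the conditional variance is the least conditional mean-square error. \emph{Fact 2:} a random vector with covariance $\bSigma$ has differential entropy $h\le\tfrac12\log((2\pi e)^n\det\bSigma)$, and $h(\by_t\mid\bx)=\tfrac12\log((2\pi e)^n\det\bQ)$. \emph{Identity:} the matrix/colored-noise de~Bruijn identity
\[
\tfrac{\rmd}{\rmd t}\cov_t=-\E\!\left[\bSigma_t(\by_t)\,\bQ^{-1}\,\bSigma_t(\by_t)\right],
\]
the matrix-valued analogue of the Guo--Shamai--Verd\'u relation $\tfrac{\rmd}{\rmd t}\,\mmse(t)=-\E[\var(\bx\mid\by_t)^2]$. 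I would derive it by differentiating under the integral sign in $\cov_t=\cov(\mu)-\E[\bm_t(\by_t)\bm_t(\by_t)^\top]$; equivalently, it is the quadratic variation of the stochastic-localization martingale $\bm_t$ (with $\rmd\bm_t=\bSigma_t\bQ^{-1/2}\rmd\bB_t$), and it makes $\cov_t$ nonincreasing in the PSD order.

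\textbf{Deriving the three bounds.} For \eqref{eq:cov1}, apply Fact~1 with $g(\by_t)=\langle\bv,\by_t\rangle/\sqrt t$, whose conditional MSE is $\tfrac1t\E[\langle\bv,\bQ^{1/2}\bz\rangle^2]=\tfrac1t\bv^\top\bQ\bv$; integrating over $\tau$ and using $1/\tau\le1$ gives $\bv^\top\E_\theta\cov(\mu_\theta)\,\bv\le\bv^\top\bQ\bv$. For the lower bound in \eqref{eq:entropy}, since $\mu_\theta\ll\mu\ll\nu$ a.s., taking $\log\tfrac{\rmd\mu_\theta}{\rmd\nu}=\log\tfrac{\rmd\mu_\theta}{\rmd\mu}+\log\tfrac{\rmd\mu}{\rmd\nu}$ in expectation under the joint law of $(\bx,\theta)$ yields $\E_\theta\KL(\mu_\theta\|\nu)-\KL(\mu\|\nu)=\E_\theta\KL(\mu_\theta\|\mu)=I(\bx;\theta)\ge0$. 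For the upper bound, $\tau\perp\bx$ gives $I(\bx;\theta)=I(\bx;\by\mid\tau)=\int_1^2 I(\bx;\by_t)\,\rmd t$, and Fact~2 with $\cov(\by_t)=t\,\cov(\mu)+\bQ$ gives $I(\bx;\by_t)=h(\by_t)-h(\by_t\mid\bx)\le\tfrac12\log\det(\id+t\bQ^{-1}\cov(\mu))$; since $t\le2$ and the right-hand side is nondecreasing in $t$, averaging over $\tau$ yields \eqref{eq:entropy} (the same bound also falls out of the I-MMSE identity $\tfrac{\rmd}{\rmd t}I(\bx;\by_t)=\tfrac12\Tr(\bQ^{-1}\cov_t)$ combined with $\cov_t\preceq\cov(\mu)-t\,\cov(\mu)(t\,\cov(\mu)+\bQ)^{-1}\cov(\mu)$). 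For \eqref{eq:cov2}, integrate the Identity over $[1,2]$: $\E_\theta\{\cov(\mu_\theta)\bQ^{-1}\cov(\mu_\theta)\}=\int_1^2\E[\bSigma_t(\by_t)\bQ^{-1}\bSigma_t(\by_t)]\,\rmd t=\cov_1-\cov_2\preceq\cov_1\preceq\cov_0=\cov(\mu)$.

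\textbf{Main obstacle.} Facts~1 and~2 and the chain-rule step are immediate, so the entire proof hinges on the matrix de~Bruijn identity, and making it rigorous for an arbitrary Borel $\mu$ is the technical heart. One must verify that $p_t(\by)=\int\phi_\bQ(\by-\sqrt t\,\bx)\,\mu(\rmd\bx)$ is jointly smooth on $\{t>0\}$, that $\bm_t(\by_t)$, $\bSigma_t(\by_t)$ and the moments entering the differentiation are integrable (this is precisely where the sub-Gaussian tails of the posterior rescue an otherwise unbounded $\mu$), and that $\tfrac{\rmd}{\rmd t}$ and $\E$ may be interchanged; packaging this as a self-contained lemma — or, alternatively, constructing the stochastic-localization process and computing its quadratic variation — together with the $\varepsilon\downarrow0$ passage removing the hypothesis $\bQ\succ0$, is the only nonroutine work. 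Everything else is the averaging device already familiar from the pinning lemma.
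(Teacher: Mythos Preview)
Your proposal is correct and follows essentially the same route as the paper: the same suboptimal-estimator comparison for \eqref{eq:cov1}, the same $I(\bx;\by)=h(\by)-h(\by\mid\bx)$ plus Gaussian-maximizes-entropy argument for \eqref{eq:entropy}, and the same matrix de~Bruijn/MMSE derivative identity integrated over $[1,2]$ for \eqref{eq:cov2}. The paper carries out your ``differentiate $\cov_t=\cov(\mu)-\E[\bm_t\bm_t^\top]$'' option explicitly via Gaussian integration by parts (Stein's lemma) rather than invoking the stochastic-localization quadratic variation, but this is exactly one of the two derivations you sketched.
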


\begin{remark}
Equations \eqref{eq:cov1} and \eqref{eq:cov2} bound the covariance of the component
measure $\mu_{\theta}$. On the other hand, Eq.~\eqref{eq:entropy}
controls the entropy of the variable $\theta$, which is given by 
the difference between the entropy of the mixture and the average entropy of the components.
More precisely, in information theoretic terms, we control the mutual information
 $\E_{\theta}\KL(\mu_{\theta}\|\nu) - \KL(\mu\|\nu) =\info(\theta;\bx)$ (see proof for definitions).
\end{remark}

\begin{remark}
As mentioned above, the difference between the decomposition presented here
and the pinning lemma of \cite{MontanariSparse} is in the choice of the noisy channel.
Here we use the Gaussian channel \eqref{eq:GaussianChannel} 
(neglecting for a moment the fact that $\tau$ is random). In contrast,
in \cite{MontanariSparse}, the channel is an erasure channel: $y_i=x_i$ with probability $\eps$
and $y_i=\ast$ (an erasure) otherwise, independently across coordinates.
\end{remark}

\begin{remark}
The expectation with respect to $\theta$ in Eqs.~\eqref{eq:cov1} to \eqref{eq:cov2}
is an expectation with respect to $(\tau,\by)$: $\E_{\theta}(\,\cdot\,)=
 \E_{\tau}(\E_{\by|\tau}(\,\cdot\,|\tau))$. The expectation with respect to
 $\tau$ can be eliminated  using Markov inequality. This implies
 that there exists a set $T\subseteq [1,2]$ of Lebesgue measure $|T|\ge 1-3M^{-1}$
 such that, for any $t\in T$, the inequalities 
 \eqref{eq:cov1}, \eqref{eq:entropy}, \eqref{eq:cov2} hold up to a factor $M$
 for $\tau=t$. For instance, the first inequality is replaced by
 $\E(\cov(\mu_{\theta})|\tau=t)\preceq M\bQ$. Note that the conditional expectation
 is simply the expectation with respect to $\by = \sqrt{t}\bx +\bQ^{1/2}\bz$.
 \end{remark}

We finally notice that the above information-theoretic interpretation
 does not only apply to the final decomposition \eqref{eq:FirstDec}, but to the whole
 measure-valued stochastic process defined in \cite{eldan2020taming}. We
 explain this extension in Section \ref{sec:Equivalence}.

\section{Proof}

\subsection{Proof of Eq.~\eqref{eq:cov1}}
 We compare the error of the maximum likelihood estimator of $\bx$ to
  that of the Bayes optimal estimator of $\bx$ given $\theta=(\by,\tau)$: 
  \begin{align*}
\hbx_{\sMLE} = \tau^{-1/2} \, \by~~~~\mbox{and} ~~~~ \hbx_{\sB} = \E[\bx | \by,\tau].
 \end{align*}
Let $\bR \in \R^{n \times n}$ be a fixed PSD matrix, and define
$\|\bv\|_{\bR}^2=\<\bv,\bR\bv\>$. Since $\hbx_{\sB}$ minimizes 
the mean squared error $\E \{\|\bx - \hbx(\theta)\|_{\bR}^2\}$ among all measurable 
estimators  $\hbx$, we have
\begin{align}
\E  \Tr\big[\bR(\bx - \hbx_{\sB})(\bx - \hbx_{\sB})^{\top}\big] \le 
\E  \Tr\big[\bR(\bx - \hbx_{\sMLE})(\bx - \hbx_{\sMLE})^{\top}\big]\, .
\end{align}
The left-hand side is equal to $\E \Tr(\bR\cov(\mu_{\theta}))$, and the right-hand side is equal to
\begin{align}
\E\big[\tau^{-1}\<\bz \bQ^{1/2},\bR \bQ^{1/2} \bz\>\big] = \E[\tau^{-1}] \cdot \Tr(\bR\bQ).
\end{align}
Since the inequality holds for all $\bR \succeq 0$, we conclude that
\begin{align}
\E\, \cov(\mu_{\theta}) \preceq \E[\tau^{-1}] \cdot \bQ \preceq \bQ\, .
\end{align}

\subsection{Proof of~\eqref{eq:entropy}}
This claim follows immediately using some basic inequalities from 
information theory \cite{CoverThomas,dembo1991information}. 

The mutual information of two random variables $X,Y$ is
$\info(X;Y) := \KL(\mu_{X,Y}\|\mu_X\times \mu_Y)$, where we denote by $\mu_{X,Y,\dots}$
the joint law of $X,Y,\dots$. If $\mu_X, \mu_{X|Y}(\, \cdot\, |y)\ll \nu_X$, we have
\begin{align}
\info(X;Y) := \E_{y}\KL(\mu_{X|Y}(\,\cdot\,|y)\|\nu_X)-\KL(\mu_{X}\|\nu_X)\, .
\end{align}
We consider $X=\bx$, $Y=\by$, under their joint distribution given $\tau=t$. 
In other words $\bx\sim \mu$ and $\by$ is given by Eq.~\eqref{eq:GaussianChannel} 
with $\tau=t$. We have $\mu_{\theta}=\mu_{X|Y}(\,\cdot\,|\by)$.
Using non-negativity of the mutual information, we have
\begin{align}
0\le \info(\bx;\by) = \E_{\by}\KL(\mu_{\theta}\|\nu)-\KL(\mu\|\nu)\, .
\end{align}
which yields the first inequality in Eq.~\eqref{eq:entropy}. Inverting the role of $X,Y$
in the definition of mutual information, and letting $\overline{\nu}$ be, for instance, 
 the standard Gaussian measure we get 
\begin{align}
\info(\bx;\by) = \E_{\bx}\KL(\mu_{\by|\bx}(\,\cdot\,|\bx)\|\overline{\nu})-
\KL(\mu_{\by}\|\overline{\nu})\, .
\end{align}
Recall the definition of differential entropy of a random variable $X$ with density $f$ with respect to the Lebesgue
measure: $h(X) :=-\int f(x)\log f(x) \de x$, and $h(X|Y) = h(X,Y)-h(Y)$. 
We then have from the last display
\begin{align}
\info(\bx;\by) &=  h(\by) - h(\by|\bx)\, . 
\end{align}
The differential entropy of an $n$-dimensional Gaussian vector $\bg$ with covariance $\bSigma$
is $h(\bg) = (n/2)\log(2\pi e)+(1/2)\Tr\log \bSigma$. Therefore,  
\[h(\by|\bx) =  \frac{n}{2}\log(2\pi e)+ \frac{1}{2}\Tr\log \bQ \, .\]
Moreover, the Gaussian distribution maximizes the differential entropy among all those distributions with the same covariance, hence
\begin{align} 
h(\by) \le  \frac{n}{2}\log(2\pi e)+ \frac{1}{2}\Tr\log \cov(\mu_{\by}) \, . 
\end{align}
Since $\cov(\mu_{\by}) = t \cov(\mu) + \bQ$, we obtain
\begin{align}
\info(\bx;\by) &\le \frac{1}{2}\Tr\log \cov(\mu_{\by})-  \frac{1}{2}\Tr\log \bQ \\
&=   \frac{1}{2}\log \det \big( \id_n + t \bQ^{-1} \cov(\mu)\big) \, .
\end{align}
The claim then follows since $t \le 2$.

\subsection{Proof of~\eqref{eq:cov2}}

  Fixing $\tau=t$ a non-random value, we write $\by_t$ for the corresponding output of 
  channel \eqref{eq:GaussianChannel}, namely $\by_t=\sqrt{t}\bx+\bQ^{1/2}\bz$. 
  We denote by $\mu_{\by_t}(\de\bx)$ for the conditional distribution of 
  $\bx$ given $\by_t$. In order to emphasize its
  dependence on $t$, we will write   $\muc :=\mu_{\by_t=\sqrt{t}\bx_0+\bQ^{1/2}\bz}$.
  Simplifying Bayes formula, we get 
  \begin{align}
  \muc(\de\bx) = \frac{1}{Z_{\bx_0,\bz,t}}\, 
  \exp\Big\{-\frac{t}{2}\|\bx-\bx_0\|_{\bQ^{-1}}^2 +\sqrt{t} \<\bz,\bQ^{-1/2}\bx\>\Big\}\,
  \mu(\de\bx) \, .
  \end{align} 
  Here we use the notation $\|\bv\|^2_{\bA}:= \<\bv,\bA\bv\>$.  
  
  Throughout this proof, given a measure $\nu(\de\bx)$ and function $\psi_1(\bx),\psi_2(\bx)$, 
  we use the  shorthands
  $\nu(\psi(\bx)) := \int\psi(\bx)\nu(\de\bx)$ and $\nu(\psi_1(\bx);\psi_2(\bx)):=
  \nu(\psi_1(\bx)\cdot\psi_2(\bx))-\nu(\psi_1(\bx))\nu(\psi_2(\bx))$.
  We then have:
  \begin{align}
  \frac{\de\phantom{t}}{\de t}\muc(\bx) = -\frac{1}{2}\muc\big(\bx;\|\bx-\bx_0\|_{\bQ^{-1}}^2\big)
  +\frac{1}{2\sqrt{t}}\muc\big(\bx;\<\bz,\bQ^{-1/2}\bx\>\big)\, .
  \end{align}
  (Note that $\muc(\bx)$ is the mean of the vector $\bx$.)
  
  Given a matrix $\bR\succeq 0$, we define the minimum mean square error
  \begin{align}
  \mmse(t) &:= \E\Big\{\big\|\bx-\E(\bx|\by_t)\big\|_{\bR}^2\Big\}\\
  & = \Tr\big(\bR\,\cov(\mu)\big)- \E_{\bx_0,\bz}\<\muc(\bx),\bR\muc(\bx)\>\, .
  \end{align}
  Differentiating, and using the formula above to differentiate $\muc(\bx)$,
  we get 
  \begin{align}
  \dert\mmse(t) & = -2 \cdot\E_{\bx_0,\bz}\<\muc(\bx),\bR\dert \muc(\bx)\> \\
  &=: A-B_1+B_2\, ,\label{eq:dtmmse}\\
  A&:= \E_{\bx_0,\bz} \big\{\<\muc(\bx),\bR\muc(\bx;\|\bx-\bx_0\|^2_{\bQ^{-1}})\>\big\}\, ,
  \\
  B_1 &:= \frac{1}{\sqrt{t}}\E_{\bx_0,\bz} \big\{\<\muc(\bx),\bR\muc(\bx\<\bz,\bQ^{-1/2}\bx\>)\>\big\}\, ,\\
  B_2 &:= \frac{1}{\sqrt{t}}\E_{\bx_0,\bz} \big\{\<\muc(\bx),\bR\muc(\bx)\>\muc(\<\bz,\bQ^{-1/2}\bx\>)\big\}\, .
  \end{align}
  We next use integration by parts (Stein's Lemma) to simplify the terms $B_1,B_2$.
  It is useful to note that
  \begin{align}
  \frac{1}{\sqrt{t}}\nabla_{\bz}\muc\big(\psi(\bx)\big) = \muc\big(\psi(\bx);\bQ^{-1/2}\bx\big)\, .
  \end{align}
  Writing for simplicity $\mus:=\muc$ and $\E=\E_{\bx_0,\bz}$, we get
   \begin{align}
  B_1=&\E\big\{\mus(\bx)^\top\bR\mus(\bx\|\bx\|^2_{\bQ^{-1}})\big\}  - \E \big\{\mus(\bx)^\top
  \bR \mus(\bx\bx^\top)\bQ^{-1}\mus(\bx)\big\}\label{eq:B1EQ}\\
  &+\E\Tr\big\{\bQ^{-1}\big[\mus(\bx\bx^\top)-\mus(\bx)\mus(\bx^\top)\big]
  \bR\mus(\bx\bx^\top)\big\}\, ,\nonumber\\
  B_2= & \E\big\{\mus(\bx)^\top\bR\mus(\bx)\mus(\|\bx\|_{\bQ^{-1}}^2)\big\}-
  \E\big\{\mus(\bx)^\top \bR\mus(\bx)\mus(\bx)^\top \bQ^{-1}\mus(\bx)\big\}\label{eq:B2EQ}\\
  &+2\E\big\{\mus(\bx)^{\sT}\bQ^{-1}[\mus(\bx\bx^\top)-\mus(\bx)\mus(\bx)^{\top}]\bR\mus(\bx)\big\}\, .
  \nonumber
   \end{align}
  Finally, by the tower property of conditional expectation
  \begin{align}
  A& = \E\big\{\mus(\bx)^\top\bR\mus(\bx;\|\bx\|^2_{\bQ^{-1}})\>\big\}-
  2\E \big\{\mus(\bx)^\top\bR\mus(\bx;\bx^{\top})\bQ^{-1}\bx_0\big\}\\
  & = \E\big\{\mus(\bx)^\top\bR\mus(\bx;\|\bx\|^2_{\bQ^{-1}})\>\big\}-
  2\E \big\{\mus(\bx)^\top\bR[\mus(\bx\bx^{\top})-\mus(\bx)\mus(\bx)^\top]\bQ^{-1}\mus(\bx)\big\}\, .
  \label{eq:AEQ}
  \end{align}
  We next substitute Eqs.~\eqref{eq:B1EQ}, \eqref{eq:B2EQ}, and \eqref{eq:AEQ}
 in Eq.~\eqref{eq:dtmmse} to get
  \begin{align}
  \dert\mmse(t) = -\E\Tr\big\{\cov(\mus)\bQ^{-1}\cov(\mus)\bR\big\} \, .
  \end{align}
  We next integrate this identity for $t\in[1,2]$, to get
  \begin{align*}
  \int_1^2\E\Tr\big\{\cov(\mus)\bQ^{-1}\cov(\mus)\bR\big\} \de t &=
  \mmse(1)-\mmse(2)\le \mmse(0)\\
  & =  \Tr( \cov(\mu)\bR)\,.
  \end{align*}
  Finally, the integral over $t$ can be interpreted as an expectation over 
  $\tau\sim\Unif([1,2])$, whence
  \begin{align*}
\E\Tr\big\{\cov(\mu_{\theta})\bQ^{-1}\cov(\mu_{\theta})\bR\big\} \le
\Tr( \cov(\mu)\bR)\, .
\end{align*}
Since this holds for any $\bR\succeq 0$, we proved Eq.~\eqref{eq:cov2}.

\section{Extension to the stochastic localization process}
\label{sec:Equivalence}

Another way of writing the output of the Gaussian channel is as follows: Let $\bx \sim \mu$ and $(\bB_t)_{t \ge 0}$ be a standard Brownian motion in $\R^n$. Further, let 
\begin{equation}
\bar{\by}_t = t \bx + \bQ^{1/2} \bB_t.
\end{equation}
Then it is clear that $\bar{\by}_t$ has the same law as the vector $\sqrt{t}\by$, so the conditional distribution of $\bx$ given $\by$ has the same law as the probability measure 
\begin{equation}\label{eq:mut}
\mu_t(\de \bx) = \frac{1}{Z_t} 
\exp\Big\{\langle \bar{\by}_t ,  \bx\rangle_{\bQ^{-1}} - \frac{t}{2} \|\bx\|_{\bQ^{-1}}^2\Big\}  \mu(\de \bx) \, .
\end{equation}

We will show that the measure-valued process $(\mu_t)_{t \ge 0}$ satisfies the stochastic 
localization SDE of Eldan~\cite{eldan2020taming}, as stated below.
\begin{theorem} 
Write $L_t$ for the likelihood ratio process of $\mu_t$ with respect to $\mu$: 
\begin{equation}
L_t(\bx) := \frac{\de \mu_t}{\de \mu}(\bx) \, .
\end{equation}
Then there exist a Brownian motion $(\bW_t)_{t \ge 0}$ adapted to the filtration generated by $(\bar{\by}_t)_{t \ge 0}$, such that for all $\bx \in \R^n$ and $t \ge 0$, we have
\begin{equation}\label{eq:SL_SDE}
\de L_t (\bx) = L_t(\bx) \big\langle \bx - \ba_t , \bQ^{-1/2} \de \bW_t \big\rangle \, ,~~~~\mbox{and}~~~L_0(\bx)=1 \, ,
\end{equation}
where
\begin{equation}
\ba_t = \E\big\{\bx | \bar{\by}_t\big\} = \int \bx \,  \mu_t(\de\bx) \, .
\end{equation}
\end{theorem}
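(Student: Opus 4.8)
\textit{Proof plan.} The plan is to verify \eqref{eq:SL_SDE} directly, by differentiating the explicit density \eqref{eq:mut} with Itô's formula and identifying the driving noise as the \emph{innovation process} of the observation $(\bar{\by}_t)_{t\ge 0}$. Denote by $\cF_t$ the filtration generated by $(\bar{\by}_s)_{s\le t}$, let $\bx_0\sim\mu$ be the true draw appearing in $\bar{\by}_t=\int_0^t\bx_0\,\de s+\bQ^{1/2}\bB_t$, and write $M_t(\bx):=\exp\{\langle\bar{\by}_t,\bx\rangle_{\bQ^{-1}}-\tfrac t2\|\bx\|^2_{\bQ^{-1}}\}$ for the unnormalized density and $Z_t:=\int M_t(\bx)\,\mu(\de\bx)$, so that $L_t(\bx)=M_t(\bx)/Z_t$. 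A one-line Girsanov computation shows that the likelihood of the path $(\bar{\by}_s)_{s\le t}$, as a function of the signal $\bx_0$, equals $M_t(\bx_0)$, which depends on the path only through the endpoint $\bar{\by}_t$; hence $\mu_t(\de\bx)$ defined by \eqref{eq:mut} really is the conditional law of $\bx_0$ given $\cF_t$, and in particular $\ba_t=\int\bx\,\mu_t(\de\bx)=\E[\bx_0\mid\cF_t]$. Since $\bar{\by}_0=0$ we have $M_0\equiv1$ and $Z_0=1$, so $L_0\equiv1$.

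Next I would run the Itô calculus, treating the argument $\bx$ of $L_t(\bx)$ as a fixed point. From $\de\bar{\by}_t=\bx_0\,\de t+\bQ^{1/2}\de\bB_t$ one obtains
\begin{align*}
\de M_t(\bx)=M_t(\bx)\,\Big(\langle\bx_0,\bQ^{-1}\bx\rangle\,\de t+\langle\bx,\bQ^{-1/2}\de\bB_t\rangle\Big)\, ,
\end{align*}
the Itô correction $\tfrac12\|\bx\|^2_{\bQ^{-1}}\de t$ exactly cancelling the drift produced by the $-\tfrac t2\|\bx\|^2_{\bQ^{-1}}$ term in the exponent. Integrating against $\mu$ and using $\int M_t(\bx)\,\bx\,\mu(\de\bx)=Z_t\ba_t$ gives $\de Z_t=Z_t(\langle\bx_0,\bQ^{-1}\ba_t\rangle\,\de t+\langle\ba_t,\bQ^{-1/2}\de\bB_t\rangle)$, hence $\de\langle Z\rangle_t=Z_t^2\|\ba_t\|^2_{\bQ^{-1}}\de t$. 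Applying the Itô product rule to $L_t(\bx)=M_t(\bx)\,Z_t^{-1}$ and collecting terms — the drift part simplifying via symmetry of $\bQ^{-1}$ and the identity $\langle\bx_0-\ba_t,\bQ^{-1}(\bx-\ba_t)\rangle=\langle\bx_0,\bQ^{-1}\bx\rangle-\langle\bx_0,\bQ^{-1}\ba_t\rangle-\langle\bx,\bQ^{-1}\ba_t\rangle+\|\ba_t\|^2_{\bQ^{-1}}$ — one arrives at
\begin{align*}
\de L_t(\bx)=L_t(\bx)\,\big\langle\bx-\ba_t,\ \bQ^{-1}\big(\de\bar{\by}_t-\ba_t\,\de t\big)\big\rangle\, .
\end{align*}
Setting $\bW_t:=\bQ^{-1/2}\big(\bar{\by}_t-\int_0^t\ba_s\,\de s\big)$, so that $\bQ^{-1/2}\de\bW_t=\bQ^{-1}(\de\bar{\by}_t-\ba_t\,\de t)$, puts this exactly in the form \eqref{eq:SL_SDE}.

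The remaining — and only genuinely delicate — point is to check that $(\bW_t)_{t\ge0}$ is a standard Brownian motion adapted to $(\cF_t)_{t\ge0}$. Adaptedness is clear since $\ba_s$ is $\cF_s$-measurable. Substituting $\bar{\by}_t=t\bx_0+\bQ^{1/2}\bB_t$ gives $\bW_t=\bB_t+\bQ^{-1/2}\int_0^t(\bx_0-\ba_s)\,\de s$, which is continuous with quadratic variation $\langle\bW\rangle_t=t\,\id_n$. For the martingale property, fix $s<t$: $\E[\bB_t-\bB_s\mid\cF_s]=0$ because the increment $\bB_t-\bB_s$ is independent of $\sigma(\bx_0,(\bB_u)_{u\le s})\supseteq\cF_s$, and $\E[\bx_0-\ba_u\mid\cF_s]=0$ for every $u\ge s$ by the tower property (since $\ba_u=\E[\bx_0\mid\cF_u]$ and $\cF_s\subseteq\cF_u$); hence $\E[\bW_t-\bW_s\mid\cF_s]=0$. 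By Lévy's characterization, $\bW$ is a standard $\cF_t$-Brownian motion, which completes the proof (the integrability needed to treat the stochastic integrals above as genuine local martingales is handled by a routine localization argument, which could also be phrased by invoking the Fujisaki–Kallianpur–Kunita innovations theorem directly). Beyond this filtering step, the main effort is the bookkeeping in the Itô product-rule computation, where several drift terms must be shown to cancel.
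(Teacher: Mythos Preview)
Your proof is correct and follows essentially the same route as the paper: apply It\^{o}'s formula to the explicit density \eqref{eq:mut} to obtain $\de L_t(\bx)=L_t(\bx)\,\langle\bx-\ba_t,\bQ^{-1}(\de\bar{\by}_t-\ba_t\,\de t)\rangle$, then identify the innovations process $\bW_t=\bQ^{-1/2}\big(\bar{\by}_t-\int_0^t\ba_s\,\de s\big)$ as an $(\cF_t)$-Brownian motion. The only differences are cosmetic --- the paper differentiates $\log L_t$ first and then exponentiates rather than applying the product rule to $M_t/Z_t$, and it cites Liptser--Shiryaev or Le Gall for the innovations theorem whereas you verify it directly via L\'evy's characterization.
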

\begin{proof}  
We use the representation~\eqref{eq:mut} to write
\begin{align}\label{eq:log_L}
\de \log L_t(\bx) = \langle \de \bar{\by}_t ,  \bx\rangle_{\bQ^{-1}} - \frac{1}{2} \|\bx\|_{\bQ^{-1}}^2 \de t - \de \log Z_t\, .
\end{align}
Let us write $h_t(\bx)$ for the expression appearing in the exponent in Eq.~\eqref{eq:mut}. By It\^{o}'s formula we have
\begin{align}
\de Z_t &= \int_{\R^n} \Big(\< \bx , \bQ^{-1} \de \bar{\by}_t \>- \frac{1}{2}\|\bx\|_{\bQ^{-1}}^2 
\de t \Big)  e^{h_t(\bx)} \mu(\de\bx)
 + \frac{1}{2}  \Big(\int_{\R^n} \|\bx\|_{\bQ^{-1}}^2  e^{h_t(\bx)} \mu(\de\bx) \Big) \de t \\
&=  \Big\langle \int_{\R^n} \bx e^{h_t(\bx)} \mu(\de \bx) , \bQ^{-1} \de \bar{\by}_t \Big\rangle \, . 
\end{align}
With the notation $ \ba_t =\E\big\{\bx | \bar{\by}_t\big\} = \int \bx  \, \mu_t(\de\bx)$ 
(and writing $[Z]_t$ for the quadratic variation process), we obtain
\begin{align}
\de \log Z_t &= \frac{\de Z_t}{Z_t} - \frac{1}{2} \frac{\de [Z]_t}{Z_t^2} \\
&= \big\langle \ba_t ,  \de \bar{\by}_t \big\rangle_{\bQ^{-1}} - \frac{1}{2} \| \ba_t \|_{\bQ^{-1}}^2 \de t \, .
\end{align}
Substituting this in Eq.~\eqref{eq:log_L} we have
\begin{align}
\de \log L_t(\bx) &= \big\langle \bx - \ba_t ,  \de \bar{\by}_t \big\rangle_{\bQ^{-1}} - \frac{1}{2} \Big(\|\bx\|_{\bQ^{-1}}^2 - \| \ba_t \|_{\bQ^{-1}}^2\Big)\de t \\
&= \big\langle \bx - \ba_t ,  \de \bar{\by}_t -\ba_t \de t \big\rangle_{\bQ^{-1}} - \frac{1}{2} \|\bx - \ba_t\|_{\bQ^{-1}}^2 \de t \, .
\end{align}
It remains to understand the law of the process $\bar{\by}_t - \int_0^t \ba_s \de s$. 
We let $\mF_t = \sigma(\{\bar{\by}_s: 0 \le s \le t\})$ and $\mathbb{F} = (\mF_t)_{ t \ge 0}$. 
It is known  that the process $(\bW_t)_{t \ge 0}$ defined by
\begin{equation}
\bW_t := \bQ^{-1/2}\Big(\bar{\by}_t - \int_0^t \E\big\{ \bx | \mF_s \big\} \de s \Big)\, ,
\end{equation}
is an $\mathbb{F}$-adapted Brownian motion. 
See for instance Theorem 7.12 in~\cite{liptser1977statistics}, 
or Theorem 5.13 in~\cite{le2016brownian}. 
We conclude by noticing that $\E\big\{ \bx | \mF_t \big\} = \ba_t$ since $\bar{\by}_t$ is a sufficient statistic for $\bx$ under $\mu_t$; c.f.\ Eq.~\eqref{eq:mut}, therefore    
\begin{equation}
\de \log L_t(\bx) = \big\langle \bx - \ba_t ,  \bQ^{-1/2}\de \bW_t \big\rangle - \frac{1}{2} \|\bx - \ba_t\|_{\bQ^{-1}}^2 \de t \, , 
 \end{equation}
and we obtain~\eqref{eq:SL_SDE} by applying It\^{o}'s formula once more. 
\end{proof}

\section*{Acknowledgements}

We thank Ronen Eldan for a stimulating discussion. This work was carried out 
while the authors were (virtually) participating in the Fall 2020 program on Probability, Geometry and Computation in High Dimensions at the Simons Institute for the Theory of Computing.
A.M. was partially supported by  the NSF grant CCF-2006489 and the ONR grant 
N00014-18-1-2729.

\newpage

\bibliographystyle{amsalpha}
\providecommand{\bysame}{\leavevmode\hbox to3em{\hrulefill}\thinspace}
\providecommand{\MR}{\relax\ifhmode\unskip\space\fi MR }
\providecommand{\MRhref}[2]{%
  \href{http://www.ams.org/mathscinet-getitem?mr=#1}{#2}
}
\providecommand{\href}[2]{#2}

\end{document}